\theoremstyle{definition}
\newtheorem{definition}{Definition}
\theoremstyle{assumption}
\newtheorem{assumption}{Assumption}
\theoremstyle{problem}
\theoremstyle{lemma}
\newtheorem{lemma}{Lemma}
\theoremstyle{remark}
\newtheorem{remark}{Remark}
\theoremstyle{theorem}
\newtheorem{theorem}{Theorem}
\newcommand{\norm}[1]{\left \lVert #1 \right \rVert}
\def\BibTeX{{\rm B\kern-.05em{\sc i\kern-.025em b}\kern-.08em
    T\kern-.1667em\lower.7ex\hbox{E}\kern-.125emX}}
\begin{document}
\title{Direct Data Driven Control Using Noisy Measurements}
\author{Ramin Esmzad, Gokul S. Sankar, Teawon~Han, Hamidreza Modares, \IEEEmembership{Senior, IEEE}
\thanks{This work is supported by the Ford Motor Company–Michigan State University Alliance. }
\thanks{R. Esmzad and H. Modares are with Michigan State University, East Lansing, MI 48824 USA (e-mails:  (esmzadra, modaresh)@msu.edu).}
\thanks{G. S. Sankar is with Ford Motor Company, Dearborn, MI 48124, USA. (email: ggowris2@ford.com)}
\thanks{T. Han T. Han was with Ford Motor Company (e-mail: teawon.han@gmail.com)}
}

\maketitle

\begin{abstract}
This paper presents a novel direct data-driven control framework for solving the linear quadratic regulator (LQR) under disturbances and noisy state measurements. The system dynamics are assumed unknown, and the LQR solution is learned using only a single trajectory of noisy input-output data while bypassing system identification. Our approach guarantees mean-square stability (MSS) and optimal performance by leveraging convex optimization techniques that incorporate noise statistics directly into the controller synthesis. First, we establish a theoretical result showing that the MSS of an uncertain data-driven system implies the MSS of the true closed-loop system. Building on this, we develop a robust stability condition using linear matrix inequalities (LMIs) that yields a stabilizing controller gain from noisy measurements. Finally, we formulate a data-driven LQR problem as a semidefinite program (SDP) that computes an optimal gain, minimizing the steady-state covariance. Extensive simulations on benchmark systems—including a rotary inverted pendulum and an active suspension system—demonstrate the superior robustness and accuracy of our method compared to existing data-driven LQR approaches. The proposed framework offers a practical and theoretically grounded solution for controller design in noise-corrupted environments where system identification is infeasible.
\end{abstract}

\begin{IEEEkeywords}
Data-driven control, Noisy measurements, Linear Quadratic Regulator (LQR), Mean-square stability (MSS), Semidefinite programming (SDP), Linear matrix inequalities (LMI), Input-output data, Robust control.

\end{IEEEkeywords}

\section{Introduction}
\label{sec:introduction}
\IEEEPARstart{D}{irect} data-driven control has recently gained a surge of interest due to its control-oriented approach to solving control design problems \cite{9867266,gravell2022robust,ronnmark2024implementation}. That is, controller parameters are learned directly using input-output or input-state trajectories, without explicitly constructing a predictive model of the system. 
Bypassing system identification allows for leveraging the collected data to achieve what is best for the control objectives rather than using the data to fit a predictive model. This is because the best predictive model does not necessarily lead to the best control performance. 
 
 While direct data-driven methods have shown promising results, much of the existing work assumes the availability of high-quality data with full-state measurements.  This includes learning using noise-free state measurements \cite{SASSELLA20231382,DEPERSIS2021109548,ESMZAD2025112197,10644622,9705109,8960476,DAI20231376,miller2024ddsrs,DAI2023111041,8933093}. However, in reality, partial state measurements are available and these measurements are also corrupted with noise. To address this gap, recent efforts have focused on extending direct data-driven control frameworks to handle noisy state measurements or partial measurements \cite{10549788,10124991}. 
More specifically, \cite{10549788} presents a data-driven approach to designing Linear Quadratic Gaussian (LQG) controllers for unknown linear systems subject to process and measurement noise.
This approach relies on solving semi-definite programs (SDPs) to construct a robust data-driven controller
that ensures robust global exponential stability for state estimation through a Kalman filter and input-to-state practical stability for control. However, a key assumption in their work is that noise-free state trajectories are available during offline data collection, which can be unrealistic in many practical scenarios where only noisy state data can be observed. In \cite{10124991}, data-driven controllers are learned for linear time-varying (LTV) systems using noisy input-state data. A convex optimization problem is formalized using data-dependent linear matrix inequalities (LMIs). However, a fundamental issue arises when the lumped uncertainty depends on the unknown system matrix, which is not explicitly available.  Therefore, the computed lumped uncertainty cannot be quantified, making it difficult to verify its properties or ensure that it satisfies the necessary constraints for robust control synthesis. 
Therefore, existing approaches still require access to noise-free state data during the offline phase or introduce additional estimation steps that reintroduce modeling assumptions. Hence, there remains a gap in the literature regarding direct data-driven control methods that can operate reliably when only noisy state measurements are available.

In  \cite{9992774}, a behavioral representation approach  is introduced to solve the LQG control problem directly using input-output data. This framework enables the formulation of the optimal LQG controller as a static feedback gain in behavioral space, which simplifies the subsequent data-driven design and implementation. Despite these advancements, a notable limitation of this approach is the requirement of expert-generated input-output data with specific rank conditions during the learning phase. Such assumptions may restrict practical applicability in scenarios with substantial measurement noise or limited data availability.

Together, these works provide various pathways to combine data-driven principles with classical optimal control tools in the presence of noisy or partial measurements. They highlight the potential of direct data-driven control approaches to maintain performance guarantees and stability without requiring extensive model identification steps. This paper builds on these ideas and focuses on new techniques for \emph{direct data-driven control using noisy state measurements}, aiming for both simplicity and robustness against the uncertainties encountered in practical applications.
This paper seeks to develop a direct data-driven control framework tailored for scenarios where one only has access to one trajectory of noisy state measurements. 
The main contribution of this work is to show that, even under these restrictive conditions, one can design effective feedback policies without requiring explicit identification of the underlying system. 
In contrast to prior studies \cite{DEPERSIS2021109548,ESMZAD2025112197} that presume noise-free state data, we introduce an approach that accommodates the realities of sensor noise in state accessibility.


\subsection{Main Contributions}

This paper introduces a novel direct data-driven control framework tailored for linear systems with noisy state measurements. The proposed methodology avoids explicit system identification and directly utilizes noisy input-output data to synthesize optimal stabilizing controllers. The main contributions are summarized below:

\begin{itemize}
\item \textbf{A Novel Uncertainty-aware Closed-loop Covariance Formalization:} We introduce a novel uncertainty-aware formulation of the closed-loop data-based covariance matrix that explicitly accounts for the propagation of process and measurement noise. Unlike previous works that neglect noise terms, our approach models these uncertainties directly in the controller synthesis, resulting in a more robust and reliable feedback gain design.
    \item \textbf{Theoretical Guarantee for Mean Square Stability (MSS) Preservation:} We provide a data-driven uncertain closed-loop system matrix in which its MSS satisfaction guarantees that the true nominal closed-loop system matrix also inherits the MSS property. This result (Theorem~\ref{th:stability}) provides the theoretical foundation for data-driven parameterization of the uncertain closed-loop matrix.

\item \textbf{Direct Data-Driven Robust Optimal Feedback Synthesis:} We develop a unified approach for synthesizing a stabilizing state feedback controller directly from noisy input-output data. First, we derive an LMI-based condition that ensures mean-square stability (MSS) of the closed-loop system. Building on this, we formulate a data-driven LQR problem solved via an SDP, which accounts for process and measurement noise and yields an optimal gain that ensures robust performance even from a single noisy trajectory (Theorems~\ref{th:ddstability} and~\ref{th:ddlqrnoisymeas}).


\end{itemize}

Collectively, these contributions offer a cohesive and tractable framework for controller design in realistic settings with sensor noise, and significantly extend the applicability of data-driven control methods to noise-corrupted environments.

\noindent
The paper is organized as follows. Section \ref{sec:prelim} reviews the system model and introduces the necessary notation and assumptions for both model-based and data-based settings. Section \ref{sec:mainresults} formulates the main theoretical results, including MSS preservation and data-driven stability conditions using noisy input-output data, and also extends the analysis to optimal control by presenting a novel data-driven LQR formulation under noisy state measurements and deriving the corresponding SDP. Section~\ref{sec:sim} presents simulation results comparing the proposed method with existing data-driven LQR techniques on two benchmark systems. Section~\ref{sec:con} concludes the paper and describes future research directions. \vspace{6pt}

\noindent \textit{Notation:} Let $\mathbb{R}^{m \times n }$ denote the real linear space for all real matrices with dimensions $m \times n$. $\mathbb{N}$ is the set of natural numbers. A positive (semi) definite matrix $P \in \mathbb{R}^{n \times n}$ is denoted by $P \succ 0,\: (P \succcurlyeq 0)$. The transpose of a matrix $Q$ is denoted by $Q^\top$.  For a matrix $A$, $\text{Tr}(A)$ is its trace. All random variables are assumed to be defined on a probability space $(\Omega,\mathcal{F},\mathbb{P})$, with $\Omega$ as the sample space, $\mathcal{F}$ as its associated $\sigma$-algebra and $\mathbb{P}$ as the probability measure. For a random variable $w: \Omega \longrightarrow \mathbb{R}^n$ defined in the probability space $(\Omega,\mathcal{F},\mathbb{P})$, with some abuse of notation, the statement $w \in \mathbb{R}^n$ is used to state a random vector with dimension $n$. For a random vector $w$, $\mathbb{E}[w]$ indicates its mathematical expectation.
{$\mathcal{N}(0, W)$ represents the probability density function of a Gaussian distribution with mean $0$ and covariance matrix $W.$ $\mathbb{N}$ is the set of natural numbers.

\section{Preliminaries}\label{sec:prelim}

\subsection{Model-Based System Description}
The dynamics of an LTI system can be presented as
\begin{subequations}
\label{eq:system}
\begin{align} 
    x_{k+1} &= A x_k + B u_k + w_k, \label{eq:syst} \\
    y_k &= x_k + \upsilon_k, \label{eq:meas}
\end{align}
\end{subequations}
where \( x_k \in \mathbb{R}^n \) is the system's state, \( u_k \in \mathbb{R}^m \) is the control input, and \( y_k \in \mathbb{R}^n \) is the measurement vector. The matrices \( A \in \mathbb{R}^{n \times n} \) and \( B \in \mathbb{R}^{n \times m} \) are the state transition and input matrices, respectively. The system's noise is governed by $\omega_k \sim \mathcal{N}(0, W)$ where 
$W \succ 0 \in \mathbb{R}^{n\times n}$ is its covariance. Furthermore, the measurement's noise is distributed according to $\upsilon_k \sim \mathcal{N}(0, V)$ where $V \succ 0 \in \mathbb{R}^{n\times n}$ is its covariance.
The control input is parameterized as
\begin{align}
    u_k = K y_k,  \label{eq:policy}
\end{align}
where \( K  \in \mathbb{R}^{m \times n}\) is a static feedback gain that will be designed directly using the collected input-output data from the system. The model-based closed-loop system is 
\begin{align}
    x_{k+1} = (A+BK) x_k + B K \upsilon_k + \omega_k.\label{eq:closedloop}
\end{align}
\begin{assumption}
The system's noise covariance $W$ and the measurement's noise covariance $V$ are known.
\end{assumption}
\begin{definition}[Mean Square Stability (MSS)]{\cite{LAI2023110685,1103840}}
For a given control policy $u_k=h(x_k)$, the system (\ref{eq:syst}) is called MSS if there exists a positive definite $\Sigma_{\infty} \in \mathbb{R}^{n \times n}$ such that $\lim_{k \xrightarrow{}\infty}{\norm{\mathbb{E}(x_k x_k^T)-\Sigma_{\infty}}}=0$ and $\lim_{k \xrightarrow{}\infty}{\norm{\mathbb{E}(x_k)}=0}.$ In this case, the control policy is called admissible. 
\end{definition} 
\subsection{Data-Based System Description}
Suppose that we have collected $N$ sequences of data as
\begin{subequations}
\label{eq:collected_data}
\begin{align} 
    U_0 &= \begin{bmatrix}
        u_0 & u_1 & \cdots & u_{N-1}
    \end{bmatrix}, \\
    Y_0 &= \begin{bmatrix}
        y_0 & y_1 & \cdots & y_{N-1}
    \end{bmatrix}, \\
    Y_1 &= \begin{bmatrix}
        y_1 & y_2 & \cdots & y_{N}
    \end{bmatrix},
\end{align}
\end{subequations}
where $U_0 \in \mathbb{R}^{m \times N}$, $Y_0 \in \mathbb{R}^{n \times N}$ and $Y_1 \in \mathbb{R}^{n \times N}$. The corresponding noise sequences $\Omega_0 \in \mathbb{R}^{n \times N}$, $\Upsilon_0 \in \mathbb{R}^{n \times N}$, and $\Upsilon_1 \in \mathbb{R}^{n \times N}$ are 
\begin{subequations}
\begin{align}
    \Omega_0 &= \begin{bmatrix}
        \omega_0 & \omega_1 & \cdots & \omega_{N-1}
    \end{bmatrix}, \\
    \Upsilon_0 &= \begin{bmatrix}
        \upsilon_0 & \upsilon_1 & \cdots & \upsilon_{N-1}
    \end{bmatrix}, \label{eq:Upsilon0}\\
    \Upsilon_1 &= \begin{bmatrix}
        \upsilon_1 & \upsilon_2 & \cdots & \upsilon_{N}
    \end{bmatrix},
\end{align}
\end{subequations}
in which we have no access and treat their columns as Gaussian random variables in our derivations. Also, we define the following intermediary uncorrupted state samples to which we have no access to 
\begin{subequations}
\label{eq:not_collected_state}
\begin{align} 
    X_0 &= \begin{bmatrix}
        x_0 & x_1 & \cdots & x_{N-1}
    \end{bmatrix}, \\
    X_1 &= \begin{bmatrix}
        x_1 & x_2 & \cdots & x_{N}
    \end{bmatrix}.
\end{align}
\end{subequations}

\begin{assumption}
The collected data matrix $D_0 = \begin{bmatrix}
    U_0^T & Y_0^T
\end{bmatrix}^T$ is sufficiently rich (informative), i.e., it has full row rank. That is,
\begin{align}
    rank(D_0) = m+n. \label{eq:rank}
\end{align}
\end{assumption}
\noindent Direct data-driven control aims to eliminate the need for an explicit system identification step and design the controller directly using the collected data (\ref{eq:collected_data}). One way is to parameterize the closed-loop system matrices using the collected data and then design a controller for it. This involves learning the closed-loop dynamics. To employ a parameterization of $K$, inspired by ~\cite{DEPERSIS2021109548}, for any $K$ there is a matrix $G \in \mathbb{R}^{N \times n}$  such that
\begin{subequations}
\begin{align} 
    \begin{bmatrix}
        K \\ I
    \end{bmatrix} &= D_0 G, \label{eq:KD0G} 
    \end{align}
\end{subequations}
\begin{lemma}
Consider the system \eqref{eq:syst}-\eqref{eq:meas} under Assumptions 1 and 2. Let the collected data be given by \eqref{eq:collected_data}. Then, under \eqref{eq:KD0G}, the data-based closed-loop system becomes
\begin{subequations}
\begin{align}
    A+BK &= (Y_1  - \Upsilon_1 - \Omega_0) G +  A \Upsilon_0 G, \label{eq:ABK}\\
    Y_0 G &= I, \label{eq:Y0GI}
\end{align}
\end{subequations}
\end{lemma}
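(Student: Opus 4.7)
The plan is to derive both identities by a direct algebraic substitution that eliminates the unmeasured states $X_0, X_1$ in favor of the collected data matrices and the (unmeasured) noise sequences, and then to exploit the parameterization~\eqref{eq:KD0G}. No stochastic reasoning is required: the identities hold sample-wise as matrix equations, so it suffices to propagate the dynamics column by column and then lift to a compact matrix form.

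First I would stack \eqref{eq:syst} over the $N$ collected samples to get the compact form $X_1 = A X_0 + B U_0 + \Omega_0$. Applying the measurement equation \eqref{eq:meas} column-wise gives $X_0 = Y_0 - \Upsilon_0$ and $X_1 = Y_1 - \Upsilon_1$. Substituting these into the stacked dynamics yields
\[
Y_1 - \Upsilon_1 \;=\; A(Y_0 - \Upsilon_0) + B U_0 + \Omega_0,
\]
which I would rearrange into the key intermediate identity
\[
\begin{bmatrix} B & A \end{bmatrix} D_0 \;=\; A Y_0 + B U_0 \;=\; Y_1 - \Upsilon_1 - \Omega_0 + A \Upsilon_0.
\]

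Next, I would right-multiply both sides by $G$ and invoke the parameterization \eqref{eq:KD0G}, which says $D_0 G = [K^\top \;\; I]^\top$. Then
\[
A + BK \;=\; \begin{bmatrix} B & A \end{bmatrix} \begin{bmatrix} K \\ I \end{bmatrix} \;=\; \begin{bmatrix} B & A \end{bmatrix} D_0 G \;=\; (Y_1 - \Upsilon_1 - \Omega_0)\, G + A\, \Upsilon_0\, G,
\]
which is exactly \eqref{eq:ABK}. The companion identity \eqref{eq:Y0GI} is obtained for free by reading off the second block row of $D_0 G = [K^\top \;\; I]^\top$.

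There is no substantive obstacle; the only thing worth flagging is well-posedness of the parameterization. Assumption~2 ensures $\operatorname{rank}(D_0) = m + n$, so the linear equation $D_0 G = [K^\top \;\; I]^\top$ admits at least one solution $G \in \mathbb{R}^{N \times n}$ for every choice of $K$, making \eqref{eq:KD0G} a valid parameterization. Note that the matrix $A$ still appears explicitly in the term $A\Upsilon_0 G$ of \eqref{eq:ABK}; it is not eliminated by the data substitution, and this residual dependence on the unknown $A$ is the very reason the subsequent sections must treat $A\Upsilon_0 G$ as an uncertainty block rather than a known quantity.
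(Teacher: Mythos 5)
Your proposal is correct and follows essentially the same route as the paper: stack the dynamics over the $N$ samples, substitute $X_0 = Y_0 - \Upsilon_0$ and $X_1 = Y_1 - \Upsilon_1$ to obtain the measurement-domain relation $Y_1 = A Y_0 + B U_0 - A\Upsilon_0 + \Upsilon_1 + \Omega_0$, right-multiply by $G$, and invoke $D_0 G = \begin{bmatrix} K^\top & I \end{bmatrix}^\top$ to read off both \eqref{eq:ABK} and \eqref{eq:Y0GI}. Your added remark that the rank condition of Assumption~2 guarantees existence of $G$ for every $K$ is a sound (and welcome) well-posedness observation, but it does not alter the argument.
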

\noindent \textit{Proof:}
Based on the system dynamics \eqref{eq:syst} and measurement model \eqref{eq:meas}, the collected data satisfies 
\begin{subequations}
\begin{align}
    X_1 &= A X_0 + B U_0 + \Omega_0, \label{eq:data-integrity} \\
    Y_0 &= X_0 + \Upsilon_0,
\end{align}
\end{subequations}
and, as a result, the dynamics of the measurements are
\begin{align}
    Y_1 &= A Y_0 + B U_0 - A \Upsilon_0 + \Upsilon_1 + \Omega_0. \label{eq:meas-dyn}
\end{align}
Multiplying \eqref{eq:meas-dyn} by $G$ from right yields
\begin{subequations}
\begin{align}
    Y_1 G &= A Y_0 G + B U_0 G - A \Upsilon_0 G + \Upsilon_1 G + \Omega_0 G. \label{eq:Y1G} 
\end{align}
\end{subequations}
Using the parametrization of \eqref{eq:KD0G} in this equation it yields \eqref{eq:ABK} and \eqref{eq:Y0GI}.

\begin{remark}
    The parameterization in Equation \eqref{eq:KD0G} follows the same structural framework as presented in \cite{DEPERSIS2021109548,ESMZAD2025112197}. However, it is specifically formulated for input-output data with noisy state measurements, as opposed to input-state data. Furthermore, the closed-loop matrix in Equation \eqref{eq:ABK} incorporates two additional terms, namely \( -\Upsilon_1 G \) and \( A \Upsilon_0 G \), which arise due to measurement noise. Notably, the closed-loop matrix also exhibits dependence on the unknown system matrix \( A \). This dependence renders the direct application of the parameterization in Equation \eqref{eq:ABK} infeasible for data-driven derivations. In \cite{10124991}, a bound on the lumped noise term \( A \Upsilon_0 - \Upsilon_1 - \Omega_0 \) was proposed to address this issue and enhance robustness. 
\end{remark}

\section{Main Results} \label{sec:mainresults}
This section first presents a data-based control solution with MSS guarantee for the linear system  \eqref{eq:syst}-\eqref{eq:meas}. It then extends the results to optimize a linear quadratic cost function over controllers with MSS guarantees. That is, it presents a data-based solution to the  Linear Quadratic Regulator (LQR) problem with MSS guarantee. 

\subsection{Direct Data-driven Stabilizing Solutions Using Noisy Measurements}
The parametrization in \eqref{eq:ABK} shows the dependence of $A+BK$ on the matrix $A$, which is unknown. This problem is usually handled in literature by assuming an upper bound on this uncertainty $A \Upsilon_0$~\cite{10124991}. In this paper, we propose to use the uncertain closed-loop matrix 
\begin{align}
    A+BK-A \Upsilon_0 G &= (Y_1  - \Upsilon_1 - \Omega_0) G , \label{eq:ABK_uncer}
\end{align}
instead of an unknown closed-loop matrix $A+BK$. To be a viable substitution and approach, we must show that the MSS of the uncertain closed-loop matrix
\begin{align}
    A+BK-A \Upsilon_0 G= A(I - \Upsilon_0 G) + B K \label{eq:ABK_uncertain}
\end{align}
implies the MSS of $A+BK.$ 
\begin{assumption}
The pair $(A, B)$ is unknown but stabilizable.
\end{assumption}

\begin{theorem}[MSS Preservation]\label{th:stability}
    Consider the system \eqref{eq:syst} under noisy measurements given by \eqref{eq:meas}. Let  the control policy be given by \eqref{eq:policy}. Let \( \Upsilon_0\) and \(G\) be given as \eqref{eq:Upsilon0} and \eqref{eq:KD0G}, respectively. Let Assumption 3 hold.  
    If the uncertain system matrix \eqref{eq:ABK_uncertain}
    provides MSS, then the actual closed-loop system matrix $A+BK$
    guarantees MSS as well.
\end{theorem}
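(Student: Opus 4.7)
The plan is to reduce MSS of the nominal closed-loop matrix $A+BK$ to MSS of the uncertain matrix $A_{\text{unc}}=A(I-\Upsilon_0 G)+BK$ via a Lyapunov certificate and a standard mean--variance decomposition. Concretely, I would use the equivalent Lyapunov characterization of MSS (see, e.g., the covariance recursion associated with Definition~1): the closed-loop system driven by the random matrix $A_{\text{unc}}$ is MSS only if there exists $P\succ 0$ such that
\begin{equation*}
\mathbb{E}_{\Upsilon_0}\!\bigl[A_{\text{unc}}\,P\,A_{\text{unc}}^{\top}\bigr] - P \prec 0,
\end{equation*}
where the expectation is taken with respect to the measurement noise matrix $\Upsilon_0$ whose columns are i.i.d.\ $\mathcal{N}(0,V)$.

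Next, I would exploit the zero-mean property of $\Upsilon_0$. Treating $G$ as the designed (and, for the purposes of this conditional argument, fixed) parametrization matrix, one gets
\begin{equation*}
\mathbb{E}[A_{\text{unc}}] \;=\; A\bigl(I-\mathbb{E}[\Upsilon_0]\,G\bigr)+BK \;=\; A+BK.
\end{equation*}
Applying the standard identity for the second moment of an affine random matrix, $\mathbb{E}[MPM^{\top}]=\bar M P\bar M^{\top}+\mathbb{E}[(M-\bar M)P(M-\bar M)^{\top}]$, with $\bar M=A+BK$ and $M-\bar M=-A\Upsilon_0 G$, yields
\begin{equation*}
\mathbb{E}\!\bigl[A_{\text{unc}}\,P\,A_{\text{unc}}^{\top}\bigr] \;=\; (A+BK)\,P\,(A+BK)^{\top} + A\,\mathbb{E}\!\bigl[\Upsilon_0 G P G^{\top}\Upsilon_0^{\top}\bigr]\,A^{\top}.
\end{equation*}
The second summand is positive semidefinite, so the Lyapunov certificate $P$ inherited from the uncertain system satisfies $(A+BK)P(A+BK)^{\top}\preceq \mathbb{E}[A_{\text{unc}}PA_{\text{unc}}^{\top}]\prec P$. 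This is exactly the discrete Lyapunov inequality for $A+BK$, which, together with the existence of a unique bounded steady-state covariance $\Sigma_\infty$ for the additive noise $BK\upsilon_k+\omega_k$ in \eqref{eq:closedloop}, delivers MSS of the true nominal closed-loop system.

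The step I expect to be the real obstacle is justifying that $\mathbb{E}[\Upsilon_0]G$ may be replaced by $0$ and that the covariance term factors as written. Strictly speaking, $G$ is designed from data that itself contains $\Upsilon_0$, so $G$ and $\Upsilon_0$ are correlated through the realization that produced $D_0$. The cleanest fix is a conditioning argument: condition on the offline data $D_0$ (which fixes $G$), and then interpret the uncertainty $\Upsilon_0$ appearing inside $A_{\text{unc}}$ as an \emph{auxiliary} zero-mean Gaussian surrogate with covariance $V$ per column, independent of $G$; this is exactly the distribution the controller synthesis in Section~\ref{sec:mainresults} assumes when embedding noise statistics into the LMI. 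Under that reading, $G$ behaves as a deterministic matrix with respect to the expectation above and the three-step Lyapunov argument closes. I would close the proof by noting that the resulting Lyapunov inequality implies $\rho(A+BK)<1$, and hence the nominal closed-loop covariance recursion converges to a unique $\Sigma_\infty\succ 0$, which is the definition of MSS used in the paper.
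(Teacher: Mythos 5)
Your proposal is correct and follows essentially the same route as the paper's own proof: both transfer the Lyapunov certificate $P$ from the uncertain matrix $A(I-\Upsilon_0 G)+BK$ to $A+BK$ by using the zero mean of $\Upsilon_0$ so that the cross terms vanish, leaving a positive-semidefinite quadratic noise term whose sign makes the nominal Lyapunov inequality hold (your mean--variance identity is exactly the paper's term-by-term expansion, modulo the immaterial $APA^{\top}$ versus $A^{\top}PA$ convention). Your closing remark on conditioning on $D_0$ to decouple $G$ from $\Upsilon_0$ makes explicit an independence assumption the paper's proof uses only implicitly, but it does not change the argument.
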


\begin{proof}
    By assumption, the MSS condition holds for \( A (I - \Upsilon_0 G) + B K \), which implies the existence of a positive definite matrix \( P \) such that  
    \[
        \mathbb{E} \left[ (A (I - \Upsilon_0 G) +  B K)^T P (A (I - \Upsilon_0 G) +  B K) \right] - P \preceq 0.
    \]
Expanding the expectation, we obtain  
    \begin{align}
        &\mathbb{E} \left[ (I - \Upsilon_0 G)^T A^T P A (I - \Upsilon_0 G) + 2 (I - \Upsilon_0 G)^T A^T P B K \right] \notag \\
        &+ K B^T P B K - P \preceq 0.
    \end{align}
    Rearranging terms leads to 
    \begin{align}
        &A^T P A + \mathbb{E} \left[ G \Upsilon_0^T A^T P A \Upsilon_0 G \right] \notag \\
        &+ 2 A^T P B K + K B^T P B K - P \preceq 0.
    \end{align}
    Using the inequality above, one can conclude that  
    \[
        (A + B K)^T P (A + B K) - P \preceq -G \mathbb{E} \left[ \Upsilon_0^T A P A \Upsilon_0 \right] G.
    \]
    Since the unknown right-hand side matrix is negative semi-definite, it follows that \( A + B K \) satisfies the MSS condition. This completes the proof.
\end{proof}

\noindent Theorem \ref{th:stability} establishes a foundation for deriving a data-driven formulation without the need to explicitly account for the uncertain term \( A\Upsilon_0 G \) in the data-driven representation of the closed-loop matrix \eqref{eq:ABK}. Specifically, it ensures that the right-hand side of \eqref{eq:ABK_uncer} can serve as a valid data-driven realization of the nominal closed-loop system \( A + BK \). The corresponding stability analysis in the data-driven setting is formalized in Theorem \ref{th:ddstability}.

\begin{theorem}[Data-driven MSS]\label{th:ddstability}
    Consider the system (\ref{eq:system}) and let the data collected from this system be given as (\ref{eq:collected_data}), and Assumptions 1-3 hold. Let there exists a $\gamma >0$ for which the data-driven control gain $K = U_0 G$ with $G=FY^{-1}$ and $Y=P^{-1}$ solves the following program
\begin{subequations}
\label{eq:directRobustlmi}
\begin{align}
 &\begin{bmatrix}
        -Y & (Y_1 F)^T & F^T \\
        * & -Y  & 0\\
        * & * & -\gamma I
        \end{bmatrix} \preceq 0. \label{eq:lmiDD} \\
        \quad & Y_0F=Y, \\
        \quad &Tr\left((V+W)^{-1}Y\right)-\gamma n^2\ge 0, \label{eq:tracePW} \\
        \quad & \gamma > 0,
\end{align}
\end{subequations}
Then, the closed-loop system is MSS.
\end{theorem}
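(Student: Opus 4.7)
The plan is to Schur-reduce the block LMI, combine what remains with the data identity from Lemma~1 to rewrite $Y_1 G$ in terms of the uncertain closed-loop matrix $\tilde{A}_{cl}:=A(I-\Upsilon_0 G)+BK$, take an expectation to produce a Lyapunov-style certificate for $\tilde{A}_{cl}$, and finally invoke Theorem~\ref{th:stability} to transfer MSS to the true closed-loop $A+BK$.

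First I would apply the Schur complement twice to \eqref{eq:lmiDD}, pivoting on $-\gamma I$ and then on the bottom-right $-Y$. After substituting $F=GY$ and $Y=P^{-1}$ and peeling a $Y$ off both sides, this reduces to the Lyapunov-style inequality
\begin{equation*}
(Y_1 G)^\top P\,(Y_1 G) + \gamma^{-1} G^\top G \preceq P.
\end{equation*}
The side constraint $Y_0 F=Y$ is equivalent to $Y_0 G = I$, which, together with $K=U_0 G$, realizes the parameterization \eqref{eq:KD0G}; Lemma~1 then gives the data identity $Y_1 G = \tilde{A}_{cl} + (\Upsilon_1+\Omega_0)G$.

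Next I would plug this identity into the Schur-reduced inequality and take expectation over the noise. Each column of $\Upsilon_1$ and $\Omega_0$ is zero-mean Gaussian, so $\mathbb{E}[\Upsilon_1^\top P\Upsilon_1]=\operatorname{Tr}(PV)\,I$ and $\mathbb{E}[\Omega_0^\top P\Omega_0]=\operatorname{Tr}(PW)\,I$. Treating $\Upsilon_1,\Omega_0$ as independent of $\Upsilon_0$ (and hence of $\tilde{A}_{cl}$) so that the cross terms vanish, I would obtain
\begin{equation*}
\mathbb{E}\!\left[\tilde{A}_{cl}^\top P\,\tilde{A}_{cl}\right] + \operatorname{Tr}\!\big(P(V+W)\big)\,G^\top G + \gamma^{-1} G^\top G \preceq P.
\end{equation*}
Since the last two terms are positive semidefinite, this yields $\mathbb{E}[\tilde{A}_{cl}^\top P\,\tilde{A}_{cl}]\preceq P$, which is precisely the MSS hypothesis of Theorem~\ref{th:stability}; applying that theorem produces MSS of $A+BK$.

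The trace constraint \eqref{eq:tracePW} is used to calibrate $\gamma$ against the noise footprint via a trace Cauchy--Schwarz inequality $\operatorname{Tr}(P(V+W))\cdot\operatorname{Tr}(P^{-1}(V+W)^{-1})\geq n^2$, which guarantees that the $\gamma^{-1}G^\top G$ slack is on the right scale relative to the noise-covariance term $\operatorname{Tr}(P(V+W))G^\top G$; this calibration is also what will make the SDP in Theorem~\ref{th:ddlqrnoisymeas} meaningful. The main obstacle is the stochastic bookkeeping in the expectation step: $\Upsilon_0$ and $\Upsilon_1$ share the columns $\upsilon_1,\dots,\upsilon_{N-1}$, so strict independence between $\tilde{A}_{cl}$ and $(\Upsilon_1+\Omega_0)G$ does not literally hold, and the cross terms need justification, either by showing that their contribution remains positive semidefinite (so the inequality survives) or by adopting an appropriate independence assumption on the single realization used at design time.
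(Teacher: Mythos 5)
Your proof skeleton is the paper's argument run in reverse (Schur-reduce \eqref{eq:lmiDD} to $G^\top Y_1^\top P Y_1 G + \gamma^{-1}G^\top G \preceq P$, use $Y_0F=Y \Leftrightarrow Y_0G=I$, decompose $Y_1G$, take expectations over the noise, invoke Theorem~\ref{th:stability}), and the Schur reduction itself is correct. The genuine gap is in the expectation step, and it is a sign error, not a bookkeeping technicality. Given the recorded data, $Y_1G$ is a \emph{fixed} matrix, so $\tilde A_{cl}=Y_1G-(\Upsilon_1+\Omega_0)G$ is an affine function of the noise; the cross terms between $\tilde A_{cl}$ and $(\Upsilon_1+\Omega_0)G$ therefore cannot vanish. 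Writing $N=\Upsilon_1+\Omega_0$, one has $\mathbb{E}\bigl[\tilde A_{cl}^\top P NG + G^\top N^\top P \tilde A_{cl}\bigr] = -2\,G^\top\mathbb{E}[N^\top P N]G = -2\,\mathrm{Tr}\bigl(P(V+W)\bigr)G^\top G$, which flips the sign of the noise term in your display. The correct identity is
\begin{equation*}
\mathbb{E}\bigl[\tilde A_{cl}^\top P \tilde A_{cl}\bigr] = (Y_1G)^\top P\,(Y_1G) + \mathrm{Tr}\bigl(P(V+W)\bigr)G^\top G,
\end{equation*}
(this is the paper's inequality \eqref{eq:P_lmi_data}; there the cross terms \emph{do} vanish because $Y_1$ is deterministic and the noise is zero-mean), so the Schur-reduced LMI only yields
\begin{equation*}
\mathbb{E}\bigl[\tilde A_{cl}^\top P \tilde A_{cl}\bigr] \preceq P + \bigl(\mathrm{Tr}(P(V+W)) - \gamma^{-1}\bigr)G^\top G.
\end{equation*}
The noise term lands on the harmful side and must be \emph{absorbed} by the $\gamma^{-1}G^\top G$ slack, which requires $\mathrm{Tr}(P(V+W))\le 1/\gamma$ — exactly the paper's second auxiliary inequality, which your derivation never invokes. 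The tell that something went wrong is that under your version the conclusion would survive with the $\gamma$-block deleted from \eqref{eq:lmiDD} and constraint \eqref{eq:tracePW} discarded entirely; neither of your proposed fixes at the end (the cross terms being positive semidefinite, or an added independence assumption) repairs this, since the cross-term contribution is negative definite in general and independence is structurally false once $Y_1$ is conditioned on.

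On the role of \eqref{eq:tracePW}: your Cauchy--Schwarz instinct, $\mathrm{Tr}\bigl(P(V+W)\bigr)\,\mathrm{Tr}\bigl((V+W)^{-1}Y\bigr)\ge n^2$, is the same device as the paper's \eqref{eq:PYineq}, but the word ``guarantees'' overstates it: this inequality shows that $\mathrm{Tr}(P(V+W))\le 1/\gamma$ \emph{implies} $\mathrm{Tr}((V+W)^{-1}Y)\ge \gamma n^2$, so \eqref{eq:tracePW} is a necessary-condition relaxation of the bound actually needed, not a certificate for it (a directionality the paper itself glosses over). To repair your proof, follow the paper's grouping: expand $\mathbb{E}[\tilde A_{cl}^\top P \tilde A_{cl}]$ around the fixed $Y_1G$, then bound $\mathrm{Tr}(P(V+W))G^\top G \preceq \gamma^{-1}G^\top G$ using the trace bound before dropping terms and applying Theorem~\ref{th:stability}.
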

\begin{proof}
  Based on Theorem 1, the MSS of the uncertain data-driven representation outlined in (\ref{eq:ABK_uncer}) results in the MSS of the closed-loop system. On the other hand, (\ref{eq:ABK_uncer}) is MSS if there exists $P >0$ such that
\begin{align}
&\mathbb{E}\bigl[ G^\top (Y_1  - \Upsilon_1 - \Omega_0)^\top P (Y_1  - \Upsilon_1 - \Omega_0) G\bigr] -   P\preceq 0.  
\end{align}
By treating columns of $\Upsilon_1$ and $\Omega_0$ as Gaussian random variables and taking the expectation concerning those random variables, one has
\begin{align}
&G^\top Y_1^\top P Y_1 G + G^\top Tr(PW + PV) G - P \preceq 0, \label{eq:P_lmi_data}
\end{align}
where
\[
 G^\top \mathbb{E}\bigl[  \Upsilon_1^\top P \Upsilon_1 +  \Omega_0^\top P \Omega_0 \bigr] G =  G^\top Tr(PW + PV) G.
\]
By pre- and post-multiplying both sides of the inequality (\ref{eq:P_lmi_data}) by $P^{-1}$, using the Schur-complement twice, and the following inequalities 
\begin{align}
    &\frac{1}{Tr((V+W)^{-1}Y)} < \frac{Tr(P(W+V))}{n^2} \label{eq:PYineq}, \\
    &Tr(P (W+V)) \leq \frac{1}{\gamma}
\end{align}
the data-driven MSS LMI (\ref{eq:lmiDD}) and inequality (\ref{eq:tracePW}) are resulted, where $\gamma > 0$ is a decision variable. The term $G^\top Tr(PW + PV) G$ in \eqref{eq:P_lmi_data} brings robustness into the design of the data-driven controller. 
This completes the proof. 
\end{proof}

\noindent Theorem \ref{th:ddstability} presents a data-driven approach for ensuring the MSS of the closed-loop system without requiring explicit system identification. By leveraging the collected input-output data \eqref{eq:collected_data} and formulating the control gain \( K \) directly through the optimization program \eqref{eq:directRobustlmi}, the stability condition is enforced purely in terms of data matrices.
The constraint \eqref{eq:lmiDD} defines an LMI that provides a robust stabilization guarantee under measurement and process noise, while the trace condition \eqref{eq:tracePW} incorporates an uncertainty-aware design criterion. The optimization formulation ensures that the data-driven controller accounts for the stochastic nature of the noise terms \( W \) and \( V \), enhancing robustness in real-world scenarios. 
In general, Theorem \ref{th:ddstability} provides a principled framework for designing stabilizing controllers directly from the noisy input-output data, ensuring the MSS of the closed-loop system in a purely data-driven manner. This result paves the way for practical implementation in systems where obtaining an accurate model is challenging or infeasible.

\subsection{Direct Data-driven LQR from Noisy State Measurements}
Having established the MSS of the data-driven closed-loop system in Theorem \ref{th:ddstability}, we now extend our analysis to the optimal control design. Specifically, we consider the LQR problem in the presence of noisy measurements. The goal is to design an optimal data-driven controller that minimizes the quadratic cost function
 \begin{align} 
J(K)&=\lim_{\tau\to\infty} \frac{1}{\tau} \sum_{k=0}^{\tau}{\mathbb{E}\left[x_k^\top Q x_k + u_k^\top R u_k\right]}, \label{eq:lqrcost}
\end{align}
while ensuring robustness against measurement and process noise.
\begin{assumption}
For the system \eqref{eq:syst} and the cost function (3), the pair $(A, \sqrt{Q})$ is detectable. 
\end{assumption}

In existing direct data-driven control methods for the LQR problem \cite{9705109,DEPERSIS2021109548,ESMZAD2025112197}, the Bellman inequality or the controllability Grammian (covariance) inequality, which are expressed by the closed-loop dynamics $A+BK$, are solved using data by characterizing $A+BK$ based on data and a decision variable. That is, in these approaches that rely on access to noise-free state measurements, it is not necessary to identify the input matrix \( B \), and controller synthesis can be performed directly from input-state data. However, when only noisy state measurements are available, as in this paper, the controller design relies on a data-driven representation of the input matrix \( B \) as well as the closed-loop dynamics $A+BK$ (see equation \eqref{eq:closedloop}). A key challenge arises because the standard parameterization of \( B \) from data depends on the unknown matrix \( A \), making it infeasible to use directly. To address this, we propose an uncertainty-aware reformulation of the closed-loop matrix that removes the explicit dependence on \( A \), thereby enabling a tractable synthesis procedure. 

Based on equation \eqref{eq:closedloop}, to find a closed-loop data-driven parametrization, we need to parameterize the input matrix $B$ in terms of the noisy input-output data. To this end, consider the following parameterization 
\begin{align}
    \begin{bmatrix}
        I \\ 0
    \end{bmatrix} &= D_0 M. \label{eq:MD0}
\end{align}
where $M \in \mathbb{R}^{N \times n}.$ Now, if we multiply both sides of \eqref{eq:meas-dyn} with $M$ from right, we have 
\begin{align}
      Y_1 M &= A Y_0 M + B U_0 M - A \Upsilon_0 M + \Upsilon_1 M + \Omega_0 M. \label{eq:Y1M} 
\end{align}
Now, using the parameterization in \eqref{eq:MD0}, one has
\begin{align}
    B &= (Y_1 - \Upsilon_1 - \Omega_0 )M + A \Upsilon_0 M, \label{eq:B}
\end{align}
which provides an uncertain realization of the input matrix $B$ using the collected data.
This equation also depends on the matrix $A$, which is unknown. To alleviate this issue, as before, we consider the following uncertain input matrix
\begin{align}
    B-  A \Upsilon_0 M &= (Y_1 - \Upsilon_1 - \Omega_0 )M , \label{eq:Buncertain}
\end{align}
instead of the nominal matrix $B.$
This reformulation enables the derivation of a data-driven closed-loop system as shown next, where the uncertain dynamics are defined purely in terms of measured data and decision variables $G$ and $M$ as
\begin{align}
    x_{k+1} &= ( A+BK-A \Upsilon_0 G) x_k + (B - A \Upsilon_0 M)\upsilon_k + \omega_k,\nonumber \\
    x_{k+1} &= (Y_1 - \Upsilon_1 - \Omega_0 )(Gx_k + M K \upsilon_k) + \omega_k. \label{eq:dataK}
\end{align}
The steady state covariance for the uncertain dynamics in \eqref{eq:dataK}
\begin{equation} 
\Sigma = {\lim_{k\to\infty}} \mathbb{E} [x_{k+1} \,\, x_{k+1}^T] = {\lim_{k\to\infty}} \mathbb{E} [x_k \,\, x_k^T], \label{eq:sscov} 
\end{equation}
can be computed as
\begin{align}
& \Sigma = \mathbb{E} [((Y_1 -\Upsilon_1 - \Omega_0) (Gx_k + M K \upsilon_k)+\omega_{k})  (*)^T] \nonumber \\ 
& \Sigma = Y_1 G \Sigma G^\top Y_1^\top + Tr(G \Sigma G^\top)(W + V) + W \nonumber \\
& + Y_1 M K V K^\top M^\top Y_1^\top + Tr(M K V K^\top M^\top)(V+W), \label{eq:sigma}
\end{align}
where the expectation is taken over all random variables $x_k, \upsilon_k,$ and $\omega_k$ and unknown noise matrices $\Upsilon_1,\Omega_0$, which we treat their columns as random variables. 
Given this structure, our goal is to compute the optimal control gain \( K \) that minimizes the standard infinite-horizon quadratic cost \eqref{eq:lqrcost} while ensuring MSS and robustness against measurement noise. The following theorem provides a data-driven formulation for the LQR problem under noisy state measurements, using a convex SDP program to determine the optimal feedback gain.


\begin{remark} \label{rem:cov1}
    The data-based closed-loop covariance expression in~\eqref{eq:sigma} includes an additional term, $\mathrm{Tr}(G \Sigma_k G^\top)W$, which is absent in the formulation presented in~\cite{DEPERSIS2021109548}. This discrepancy arises because~\cite{DEPERSIS2021109548} neglects the contribution of the term $\Omega_0 G$ in the parameterization of the closed-loop matrix $A + BK$. In contrast, our approach explicitly models $\Omega_0 G$ as a random variable, allowing us to capture the effect of process noise propagation through the gain matrix $G$. Moreover, the presence of measurement noise introduces further terms into the covariance structure. This enhanced, uncertainty-aware covariance formulation contributes to the robustness of the synthesized feedback gain.
\end{remark}

\begin{theorem}[Data-driven LQR Using Noisy Measurements]\label{th:ddlqrnoisymeas}
 Consider the system (\ref{eq:syst}) with closed-loop data-based parameterization (\ref{eq:dataK}) and the steady state covariance \eqref{eq:sscov}. Let Assumptions 1-4 hold and 
 $x_0$ be a given initial condition. Then, the optimal state feedback matrix $K=U_0 G$ that minimizes the LQR cost \eqref{eq:lqrcost} 
 can be computed through the following SDP
\begin{subequations}
 \label{eq:ddlqr}
\begin{align}
    \min_{\beta>0, M, \Sigma, H, S, Z, E, F } \quad  {\beta}
\end{align}
\begin{align}
\textrm{s.t.} &\quad  Y_1 (H+ Z) Y_1^\top + Tr(H+Z) (W+V) -\Sigma+W \preceq 0 \label{eq:Sigma} \\
     & \quad  \begin{bmatrix}
         E & F\\
         * & S
     \end{bmatrix} \succeq 0, \label{eq:E}\\
     & \quad \begin{bmatrix}
        H & F \\ F^\top & \Sigma
    \end{bmatrix} \succeq 0, \label{eq:H}\\
    & \quad \begin{bmatrix}
        S & \Sigma \\ * & V
    \end{bmatrix} \succeq 0, \label{eq:S}\\
    & \quad U_0 (Z - E) U_0^\top = 0_{m \times m}, \label{eq:Z} \\
     & \quad  Y_0 F = \Sigma, U_0 M = I_m, Y_0 M = 0_{n \times m}, \label{eq:y0u0gm}\\
     & \quad  Tr(Q \Sigma)+Tr(R U_0 H U_0^\top) + Tr(RU_0 E U_0^\top) \leq \beta, \label{eq:Beta}
\end{align}
\end{subequations}
where $G=F \Sigma^{-1}$. Then,  the gain $K=U_0F\Sigma^{-1}$ guarantees MSS of the actual closed-loop system \eqref{eq:closedloop}.
\end{theorem}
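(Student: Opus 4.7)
The plan is to reduce the LQR cost \eqref{eq:lqrcost} to a trace expression in the steady-state covariance $\Sigma$, then use the covariance recursion \eqref{eq:sigma} together with Schur-complement relaxations to linearize every bilinear term into the LMIs of \eqref{eq:ddlqr}, and finally invoke Theorem~\ref{th:stability} to transfer MSS from the uncertain data-driven realization to the true closed-loop system $A+BK$.

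First, I would substitute $u_k = K(x_k + \upsilon_k)$ into \eqref{eq:lqrcost} and exploit independence of $x_k$ and $\upsilon_k$ at steady state, which collapses the cost to $J(K) = \text{Tr}(Q\Sigma) + \text{Tr}(R K \Sigma K^\top) + \text{Tr}(R K V K^\top)$. With the parameterization $K = U_0 G$ and the change of variable $F = G\Sigma$ (so that $G = F\Sigma^{-1}$), the two control-cost terms become $\text{Tr}(R U_0\, G\Sigma G^\top U_0^\top)$ and $\text{Tr}(R U_0\, GVG^\top U_0^\top)$. To convexify the first, I introduce an auxiliary variable $H$ upper bounding $G\Sigma G^\top$; the bound $H \succeq F\Sigma^{-1}F^\top$ is a Schur complement yielding \eqref{eq:H}, while the identity $Y_0 G = I$ translates into $Y_0 F = \Sigma$ in \eqref{eq:y0u0gm}.

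Second, I would start from the covariance fixed-point \eqref{eq:sigma} for the uncertain closed-loop dynamics \eqref{eq:dataK} and relax the equality to the Lyapunov-type inequality \eqref{eq:Sigma}, replacing $G\Sigma G^\top$ by $H$ and $MKVK^\top M^\top$ by $Z$. Handling the $KVK^\top$ piece is the intricate step: I construct $S \succeq \Sigma V^{-1}\Sigma$ (giving \eqref{eq:S}) and then $E \succeq F S^{-1} F^\top$ (giving \eqref{eq:E}), so that with the tight choices $S = \Sigma V^{-1}\Sigma$ and $E = F S^{-1} F^\top$ one recovers $U_0 E U_0^\top = U_0 F \Sigma^{-1} V \Sigma^{-1} F^\top U_0^\top = K V K^\top$. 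The coupling $U_0(Z-E)U_0^\top = 0$ in \eqref{eq:Z}, together with $U_0 M = I$ and $Y_0 M = 0$ from \eqref{eq:y0u0gm} (which encode the parameterization \eqref{eq:MD0} of $B$), then forces $U_0 Z U_0^\top = KVK^\top$, so that the $Z$-block of \eqref{eq:Sigma} captures exactly the $MKVK^\top M^\top$ contribution of the true covariance recursion.

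Third, combining these bounds gives $\beta \ge \text{Tr}(Q\Sigma) + \text{Tr}(R U_0 H U_0^\top) + \text{Tr}(R U_0 E U_0^\top) \ge J(K)$, so minimizing $\beta$ in \eqref{eq:Beta} produces the optimal gain $K = U_0 F \Sigma^{-1}$. For MSS, I would observe that \eqref{eq:Sigma} is a discrete Lyapunov-type inequality for the uncertain closed-loop matrix in \eqref{eq:dataK}, so the uncertain realization is MSS, and Theorem~\ref{th:stability} transfers MSS to the actual closed-loop $A+BK$. The main obstacle I anticipate is the tightness/consistency argument for the $E$--$S$--$Z$ block: showing that the chain of Schur-complement relaxations does not lose information and, at optimum, exactly reproduces $KVK^\top$ and $MKVK^\top M^\top$ (via the coupling \eqref{eq:Z}), so that the SDP genuinely minimizes $J(K)$ rather than a strictly loose surrogate. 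A secondary subtlety is justifying the passage from the covariance equality \eqref{eq:sigma} to the inequality \eqref{eq:Sigma} via a monotonicity argument, ensuring that any feasible $\Sigma$ dominates the true steady-state covariance and hence certifies MSS.
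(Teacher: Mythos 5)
Your proposal follows essentially the same route as the paper's own proof: the identical reduction of \eqref{eq:lqrcost} to $J(K)=\mathrm{Tr}(Q\Sigma)+\mathrm{Tr}(RK\Sigma K^\top)+\mathrm{Tr}(RKVK^\top)$, the same change of variables $F=G\Sigma$ with auxiliary matrices $H=F\Sigma^{-1}F^\top$, $S=\Sigma V^{-1}\Sigma$, $E=FS^{-1}F^\top$ linearized via the Schur complements \eqref{eq:H}, \eqref{eq:S}, \eqref{eq:E}, the same coupling $U_0(Z-E)U_0^\top=0$ obtained by pre- and post-multiplying $Z=MU_0EU_0^\top M^\top$ under $U_0M=I_m$, and the same transfer of MSS to the true closed loop \eqref{eq:closedloop} through Theorems~\ref{th:stability} and~\ref{th:ddstability}. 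The tightness-at-optimum and equality-to-inequality subtleties you flag are genuine but are not resolved in the paper either (it treats the LMIs as ``equivalent'' to the defining equalities and touches the looseness of \eqref{eq:Z} only in Remark~3), so your account is, if anything, the more careful rendering of the same argument.
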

\begin{proof}
    By using the characteristics of the trace and the steady-state covariance definition \eqref{eq:sscov}, the cost \eqref{eq:lqrcost} can be simplified as 
    \begin{align}
        J(K) = Tr(Q \Sigma) + Tr(R K \Sigma K^\top) + Tr(R K V K^\top). \label{eq:lqrcost_s}
    \end{align}
    Now, consider the multiplication inside the second trace term 
    \begin{align}
         K \Sigma K^\top &= U_0 G \Sigma G^\top U_0^\top = U_0 F \Sigma^{-1} F^\top U_0^\top, \nonumber \\
         & = U_0 H U_0^\top, \label{eq:u0hu0t}
    \end{align}
    where we defined a new variable $H=F \Sigma^{-1} F^\top$ which is equivalent to the LMI \eqref{eq:H}. Also, the matrix inside the third trace term can be simplified as  
    \begin{align}
        K V K^\top & = U_0 F \Sigma^{-1} V \Sigma^{-1} F^\top U_0^\top \nonumber \\
        & = U_0 E U_0^\top, \label{eq:u0eu0t}
    \end{align}
    where we used $E=F S^{-1} F^\top$ and $S^{-1}=\Sigma^{-1} V \Sigma^{-1}$ which are equivalents to \eqref{eq:E} and \eqref{eq:S} respectively. By substituting \eqref{eq:u0hu0t} and \eqref{eq:u0eu0t} into \eqref{eq:lqrcost_s}, the left-hand side of \eqref{eq:Beta} will result.
    Then, consider the following term from the steady state covariance relation \eqref{eq:sscov}
    \begin{align}
        Z &= M K V K^\top M^\top \nonumber \\
        Z &= M U_0 E U_0^\top M^\top, \label{eq:ZE}
    \end{align}
    and assuming that $U_0 M = I$ is satisfied, we multiply \eqref{eq:ZE} from left and right by $U_0$ and $U_0^\top$ respectively, we get \eqref{eq:Z}. By substituting $H$ and $Z$ into \eqref{eq:sigma}, the LMI condition \eqref{eq:Sigma} will result. The conditions in \eqref{eq:y0u0gm} come from the parameterizations in \eqref{eq:KD0G} and \eqref{eq:MD0}. As a result of the optimization problem \eqref{eq:ddlqr}, the gain $K=U_0F\Sigma^{-1}$ stabilizes the uncertain data-driven parameterized system \eqref{eq:dataK}. Also, according to the analysis in Theorems \ref{th:stability} and \ref{th:ddstability}, this gain provides MSS for the actual closed-loop system \eqref{eq:closedloop}. 
    \end{proof}

\noindent
Theorem \ref{th:ddlqrnoisymeas} provides a principled approach to solving the LQR problem directly from noisy input-output data, without requiring explicit knowledge of the system matrices \( A \) and \( B \). By formulating the control objective as an SDP, the result guarantees not only optimality with respect to the quadratic cost but also the MSS of the closed-loop system. The controller gain \( K = U_0 F \Sigma^{-1} \) is constructed purely from data, and the formulation inherently incorporates robustness against both process and measurement noise through the covariance terms. 

\begin{remark}
    In the derivation of condition \eqref{eq:Z}, we assumed that the relation \( U_0 M = I_m \) holds prior to solving the optimization problem. This assumption facilitates the simplification of the nonlinear term in \eqref{eq:ZE}. Consequently, before proceeding with the semidefinite program \eqref{eq:ddlqr}, one may verify the feasibility of the linear constraints \( U_0 M = I_m \) and \( Y_0 M = 0_{n \times m} \). If a solution for \( M \) exists, the assumption used in \eqref{eq:ZE} is justified, and the optimization problem \eqref{eq:ddlqr} can be solved accordingly.
Furthermore, since \( Z = E \) is a trivial solution that satisfies \eqref{eq:Z}, one may optionally substitute \( Z \) with \( E \) in both \eqref{eq:Z} and \eqref{eq:Sigma}, potentially simplifying the optimization formulation. This observation allows the overall procedure to be modular: first, validate the feasibility of \( M \), and then proceed to solve the remaining part of the optimization problem.
\end{remark}
\section{Simulation Results} \label{sec:sim}
This section compares the presented direct data-driven from noisy state measurements LQR (DDNSMLQR) presented in Theorem \ref{th:ddlqrnoisymeas}, with model-based LQR using noisy state measurements (NSMLQR), the direct data-driven low complexity LQR (DDLCLQR) approach ~\cite{DEPERSIS2021109548}, and the direct data-driven dynamic programming LQR (DDDPLQR) method \cite{ESMZAD2025112197}. Specifically, we compare the number of times a controller is successfully found (i.e., a feasible solution exists to the resulting optimization), the number of times for which the closed-loop eigenvalues were all inside the unit circle, the average controller gains $K$ obtained from each method and their norms with respect to the model-based LQR control gain.
We used CVXPY~\cite{diamond2016cvxpy,agrawal2018rewriting} for modeling the convex optimization problems and MOSEK~\cite{mosek} as the solver.

The tuning parameters used in each of the methods are as follows: $\alpha\geq0$ is the robustness parameter in \cite{DEPERSIS2021109548} and $\lambda=0.99$ is the discount factor in \cite{ESMZAD2025112197}.
Also, we treat $W$ and $V$ as design parameters in case we have no access to the actual covariances. 
To do a fair comparison, all the methods are assumed to have access to the same data set. Also, we use the noisy measurements $Y_0$ and $Y_1$ instead of $X_0$ and $X_1$ respectively to DDLCLQR and DDDPLQR methods. 
We consider two examples. The first one is a stable active car suspension system, and the second one is the unstable rotary inverted pendulum system. 

\subsection*{Example 1: Active Car Suspension}
Consider the following two degrees of freedom quarter-car suspension system~\cite{rajamani2011vehicle,roh1999stochastic}
\begin{align}
    \dot{x}(t)&=\begin{bmatrix}
        0 &1 &0 & -1 \\
        -\frac{k_s}{m} & -\frac{b_s}{m_s} & 0 & \frac{b_s}{m_s} \\
        0 & 0 &0& 1\\
        \frac{k_s}{m_u} & \frac{b_s}{m_u} & -\frac{k_t}{m_u} & -\frac{b_s}{m_u}
    \end{bmatrix} x(t) \nonumber \\
    & \quad + \begin{bmatrix}
        0\\ \frac{1}{m_s} \\ 0 \\ -\frac{1}{m_u}
    \end{bmatrix} u(t) + \omega(t),
\end{align}
where $x=\begin{bmatrix}
    x_1 & x_2 & x_3 & x_4
\end{bmatrix}^T$. The states $x_1$, $x_2$, $x_3$, and $x_4$ show the suspension deflection, absolute velocity of sprung mass, tire deflection, and absolute velocity of unsprung mass, respectively. Also, $m_s=240~kg$, $m_u=36~kg$, $b_s=980~\frac{N.s}{m}$, $k_s=16000~\frac{N}{m}$, and $k_t=160000~\frac{N}{m}$ represent the quarter-car equivalent of vehicle body mass, the unsprung mass due to axle and tire, suspension damper, suspension spring, and vertical stiffness spring respectively. The sampling time used for data collection is $T_s=0.05~s.$ The noise vector $\omega(t)$ shows uncertainties in system dynamics due to modeling errors and random vertical road displacement, which is assumed to have a Gaussian distribution with zero mean and the covariance matrix $W=10^{-7}I$.
The control input $u(t)$ represents the active force actuator of the suspension system. The measurement noise covariance is $V=2 \times 10^{-5} I.$ The following input is applied to the system
$
    u_k = 400\eta_k,
$
where $\eta_k$ is a zero-mean and unit-variance Gaussian noise. We have generated $150$ sets of input-(noisy )output data with $N=10$ and designed $N_K=150$ controllers using this data. For each of the designed controllers, $N_S=50$ simulations of $N_P=20$ sample points are performed. The initial condition is chosen as $\begin{bmatrix}
    0.3 & -4 & 0.1 & -1
\end{bmatrix}^T$. 
The optimal model-based gain $K^*$ for $Q=diag(10000, 1, 1, 1)$ and $R=0.000001$ is in the first row of Table \ref{tab:suspension}. The \textit{Stable Eigenvalues} column reports the eigenvalues of the actual closed-loop system when using the controller gain derived from noisy data. The \textit{Success Count} column indicates the number of instances in which the optimization problem was successfully solved based on the collected data.
In the first scenario (as depicted in Figure \ref{fig:susp-1}) we choose $\alpha=0$ for the DDLCLQR method. Simulation results in the second to fourth rows of Table \ref{tab:suspension} show that DDNSMLQR performs well with a high success rate and the least gain error. However, DDDPLQR and DDLCLQR exhibit a significant deviation in the feedback gain norm, showing weaker performance, which can be attributed to the lack of measurement noise covariance knowledge in them.

For $\alpha=100$ in DDLCLQR, the method remains stable with a success count and stable designs of 150, but the deviation from $K^*$ increases, showing neither robustness nor performance (Figure \ref{fig:susp-2}).

If we treat $W$ as a design parameter and set it to $W=10^{-4}I$ for DDDPLQR to account for more uncertainties due to measurement noise and set the parameter $\alpha=1$ for DDLCLQR which accounts for robustness against uncertainties, then DDDPLQR improves significantly with a success count of only 142, and DDLCLQR exhibits robust performance (Figure \ref{fig:susp-3}). Although setting appropriate values for $\alpha$ and $W$ can enhance the robust performance of DDLCLQR and DDDPLQR methods, there is no proper approach to calculate them.

\begin{figure}[]
\centerline{\includegraphics[width=0.9\columnwidth]{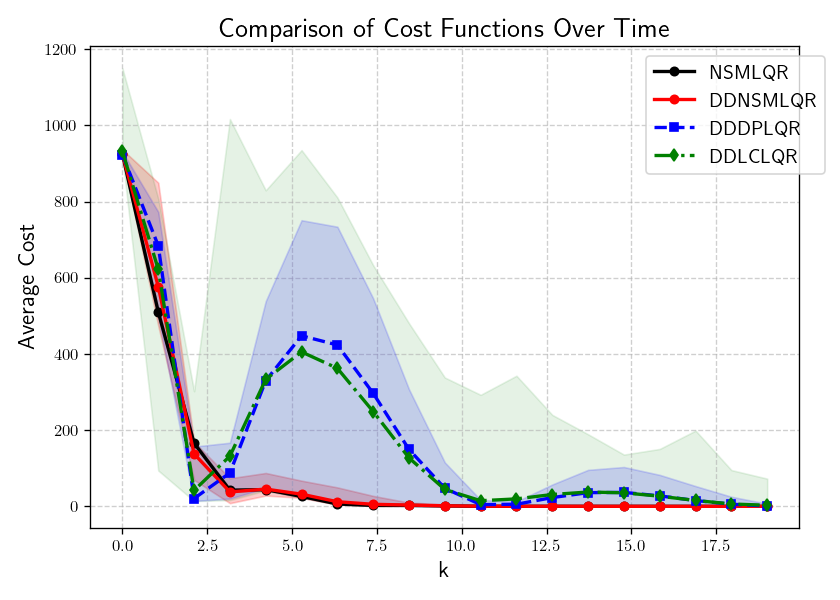}}
    \caption{Average closed-loop cost comparison for all methods on Example 1 (Active Suspension System) with DDLCLQR regularization parameter \(\alpha = 0\). DDNSMLQR achieves the lowest cost, while other methods suffer from poor stability or suboptimal gains.}

\label{fig:susp-1}
\end{figure}
\begin{figure}[]
\centerline{\includegraphics[width=0.9\columnwidth]{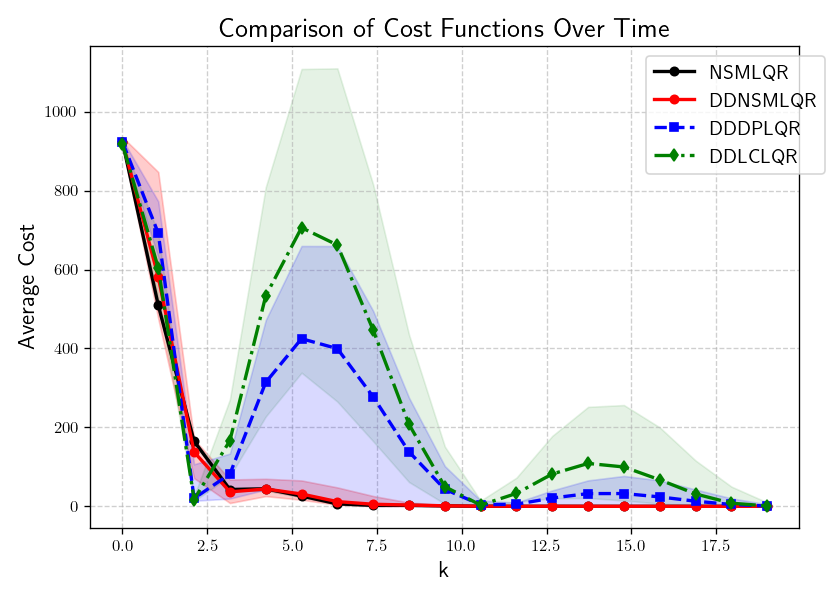}}
    \caption{Average closed-loop cost for all methods on Example 1 with DDLCLQR using \(\alpha = 100\). The increased regularization improves stability for DDLCLQR but leads to higher cost compared to DDNSMLQR.}
\label{fig:susp-2}
\end{figure}
\begin{figure}[]
\centerline{\includegraphics[width=0.9\columnwidth]{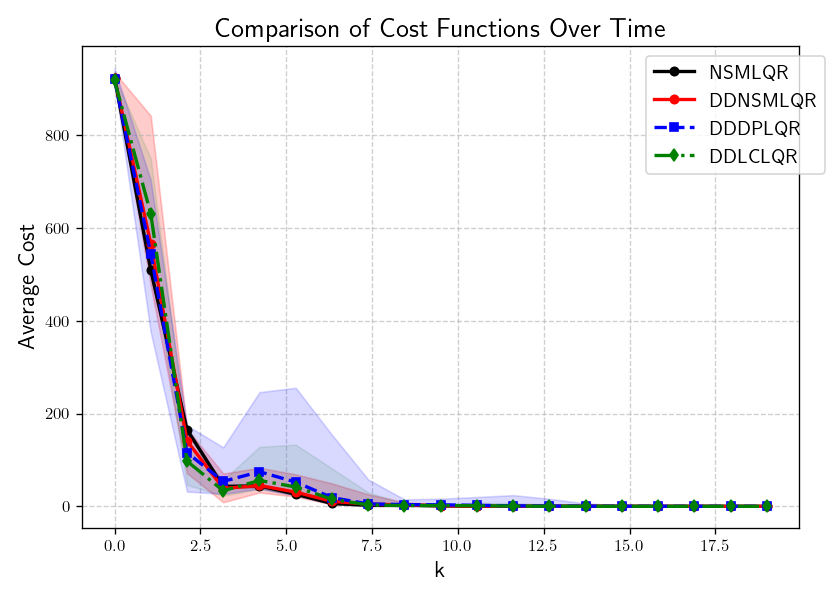}}
    \caption{Average cost comparison on Example 1 with DDDPLQR using \(W = 10^{-4}I\) and DDLCLQR using \(\alpha = 1\). Tuning these parameters improves the performance of both methods.}
\label{fig:susp-3}
\end{figure}

\begin{table*}[h]
    \centering
    \caption{Performance Comparison of Data-Driven LQR Methods Using Noisy State Measurements for Example 1}
    \label{tab:suspension}
    \renewcommand{\arraystretch}{1.3}
    \resizebox{\textwidth}{!}{%
    \begin{tabular}{clcccc}
        \toprule
        \midrule
        & \textbf{Method} & \textbf{Success Count} & \textbf{Stable Eigenvalues} & $\mathbf{\|K - K^*\|}$ & \textbf{Average Gain Matrix $K$} \\
        \midrule
        \midrule       
        & NSMLQR  & ---  & ---  & --- & \multicolumn{1}{l}{$\begin{bmatrix} -35829.36 & -3068.65 & -43378.42 & -130.83 \end{bmatrix}$} \rule{0pt}{15pt} \\
        & DDNSMLQR (Theorem \ref{th:ddlqrnoisymeas})  & 150  & 150  & 9969.88  & \multicolumn{1}{l}{$\begin{bmatrix} -28630.28 &  -2758.74 & -36488.24 &  -93.65 \end{bmatrix}$} \rule{0pt}{15pt} \\
        & DDDPLQR~\cite{ESMZAD2025112197}   & 73  & 73  & 54840.50  & \multicolumn{1}{l}{$\begin{bmatrix} -1497.58 & -729.51 &  -678.14 &   26.80 \end{bmatrix}$} \rule{0pt}{15pt} \\
        & DDLCLQR~\cite{DEPERSIS2021109548} $(\alpha=0)$   & 150  & 150  & 46437.26  & \multicolumn{1}{l}{$\begin{bmatrix} -6361.79 & -893.031 & -7554.82 & 4.62 \end{bmatrix}$} \rule{0pt}{15pt} \\
        \midrule
        & DDLCLQR~\cite{DEPERSIS2021109548} $(\alpha=100)$   & 150  & 150  & 54446.96  & \multicolumn{1}{l}{$\begin{bmatrix} -1341.01 & -398.45 & -1332.27 & 0.24\end{bmatrix}$} \rule{0pt}{15pt}\\
        \midrule
       
        & DDDPLQR~\cite{ESMZAD2025112197}  $(W=10^{-4}I)$  & 142   & 142   & 12559.17  & \multicolumn{1}{l}{$\begin{bmatrix} -27019.24 &  -2522.63 & -34444.44 & -106.96 \end{bmatrix}$} \rule{0pt}{15pt} \\
        & DDLCLQR~\cite{DEPERSIS2021109548} $(\alpha=1)$  & 150  & 150  & 20240.72   & \multicolumn{1}{l}{$\begin{bmatrix} -24525.53 &  -2606.93 & -26594.61 &
           -96.45\end{bmatrix}$} \rule{0pt}{15pt}\\
        \midrule
        \bottomrule
    \end{tabular}%
    }
\end{table*}

\subsection*{Example 2: Rotary Inverted Pendulum}
We also conducted simulations on a widely used benchmark platform \textit{Quanser rotary pendulum (model QUBE-Servo 2)}
\footnote{\url{https://www.quanser.com/products/qube-servo-2/}} 
shown in Figure \ref{fig:quanser}. The system dynamics are represented in discrete time with a sampling interval of \( T_s = 0.05 \) seconds. The state vector is defined as $x = \begin{bmatrix} \theta & \delta & \dot{\theta}   & \dot{\delta} \end{bmatrix}^T$,
where \( \theta \) (rad) is the rotary arm angle and \( \delta \) (rad) is the pendulum angle. 
The control input \( u  \) (V) is the voltage applied to the motor. Here, we considered linearized dynamics around its upright equilibrium point.
The control strategy aims to stabilize the pendulum in its upright equilibrium while ensuring smooth control effort. Both the natural gradient control method and LQR are employed to regulate the trajectory of the state vector. The discrete-time state-space matrices used in the simulation are given by
\begin{align*}
A &=
\begin{bmatrix}
1.0 &0.63 & 0.037 & 0.0023 \\
0 & 1.31 & -0.013 & 0.054\\
0 & 6.23 & 0.52 & 0.14 \\
0 & 1.24 & -0.49 & 1.27
\end{bmatrix}, 
B &=
\begin{bmatrix}
0.053 \\
0.054 \\
2.00  \\
2.05
\end{bmatrix}.   
\end{align*}
The sampling time used for data collection is $T_s=0.05~s.$ The state noise covariance is $W=10^{-6}I$.
The measurement noise covariance is $V=2 \times 10^{-4} I.$ The following input is used for data collection purposes
\[
    u_k = \begin{bmatrix}
    0.87 & -16.65 &    0.73 &  -1.30
\end{bmatrix}x_k + 0.45 \eta_k.
\] We have generated $150$ sets of input-(noisy )output data with $N=10$ and designed controllers using them. For each of the designed controllers, $N_S=50$ simulations of $N_P=20$ sample points are performed. The initial condition is chosen as $\begin{bmatrix}
    0.2 & -0.2 & 0 & 0
\end{bmatrix}^T$. 
The optimal model-based gain $K^*$ for $Q=diag(1, 100, 1, 100)$ and $R=10$ is in the first row of Table \ref{tab:inverted}.

\noindent
Table~\ref{tab:inverted} presents a comparative evaluation of several data-driven LQR controllers designed under noisy state measurements for Example 2. The reference controller, NSMLQR, is a model-based implementation of the noise-sensitive mean-square optimal LQR and serves as a benchmark for evaluating the performance of data-driven approaches.

Among all data-driven methods, the proposed DDNSMLQR controller (Theorem~\ref{th:ddlqrnoisymeas}) achieves the most favorable trade-off between accuracy and robustness. It successfully stabilizes the system in 121 out of 150 trials and achieves a low gain error of \( \|K - K^*\| = 0.78 \), closely matching the optimal reference gain. The resulting gain matrix is also numerically similar to that of NSMLQR, indicating high fidelity in replicating the model-based controller using only noisy data (Figure \ref{fig:quan2}).

In contrast, the DDDPLQR and DDLCLQR methods show significantly degraded performance under default or poorly tuned parameters. For example, both methods achieve only 2 stable eigenvalue configurations without regularization (\(\alpha=0\)) or proper noise scaling. While tuning parameters such as \(W=10^{-3}I\) for DDDPLQR and \(\alpha=0.01\) for DDLCLQR improve stability (up to 77 and 66 stable cases, respectively), they still fall short in gain accuracy compared to DDNSMLQR (please Figure \ref{fig:quan4}).

Moreover, DDLCLQR with high regularization (\(\alpha=10\)) attains full stability but at the cost of overestimated gains and larger error (\(\|K - K^*\| = 2.26\)). Similarly, DDDPLQR becomes more stable with higher noise levels (\(W=10^{-3}I\)), but the resulting gains remain less accurate.

The DDLCLQR controller with \(\alpha = 0.1\) achieves full stability across all trials (150/150) and exhibits the lowest gain error (\(\|K - K^*\| = 0.42\)) among the DDLCLQR variants, showing that moderate regularization effectively balances robustness and accuracy as depicted in Figure \ref{fig:quan3}.

Overall, DDNSMLQR offers superior consistency and precision in both stability and controller accuracy. Unlike other methods that require careful parameter tuning, DDNSMLQR directly incorporates noise modeling into its synthesis process, leading to better robustness and more reliable control performance in data-driven settings.

\begin{figure}[]
\centerline{\includegraphics[width=0.3\columnwidth]{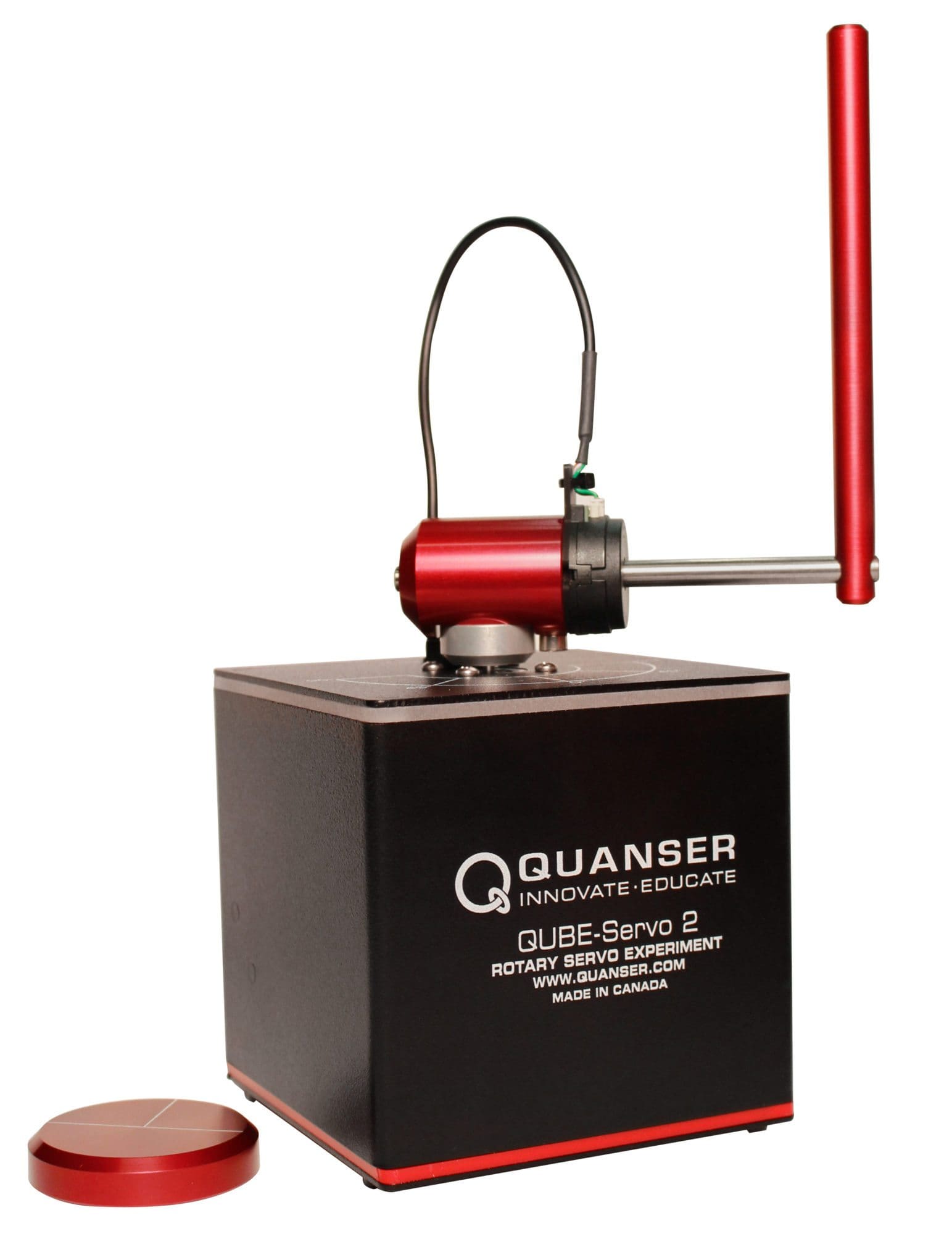}}
    \caption{Quanser Qube-Servo 2 platform used for simulating the rotary inverted pendulum dynamics in Example 2.}
\label{fig:quanser}
\end{figure}

\begin{figure}[]
\centerline{\includegraphics[width=0.9\columnwidth]{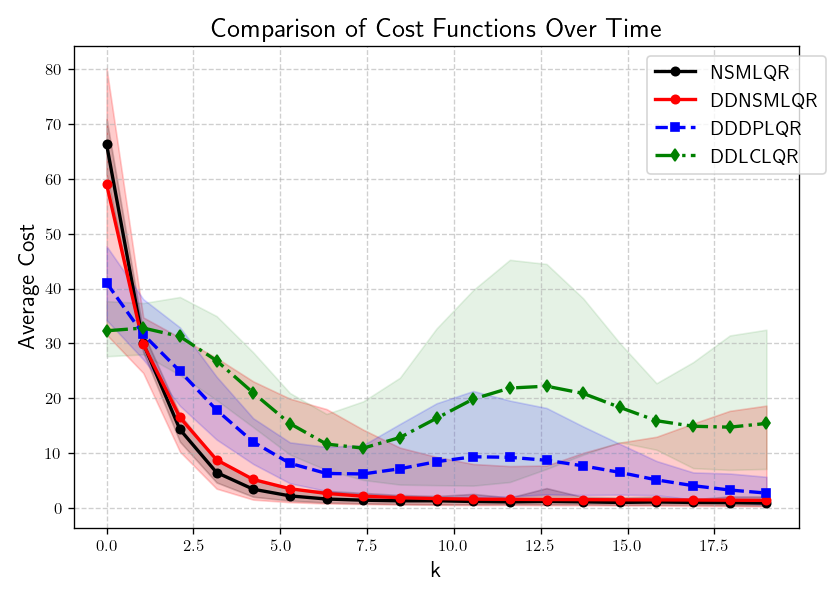}}
    \caption{Average closed-loop cost comparison for all methods on Example 2 with DDLCLQR regularization parameter \(\alpha = 0\). The absence of regularization results in instability for several approaches, while DDNSMLQR maintains robust performance.}
\label{fig:quan1}
\end{figure}

\begin{figure}[]
\centerline{\includegraphics[width=0.9\columnwidth]{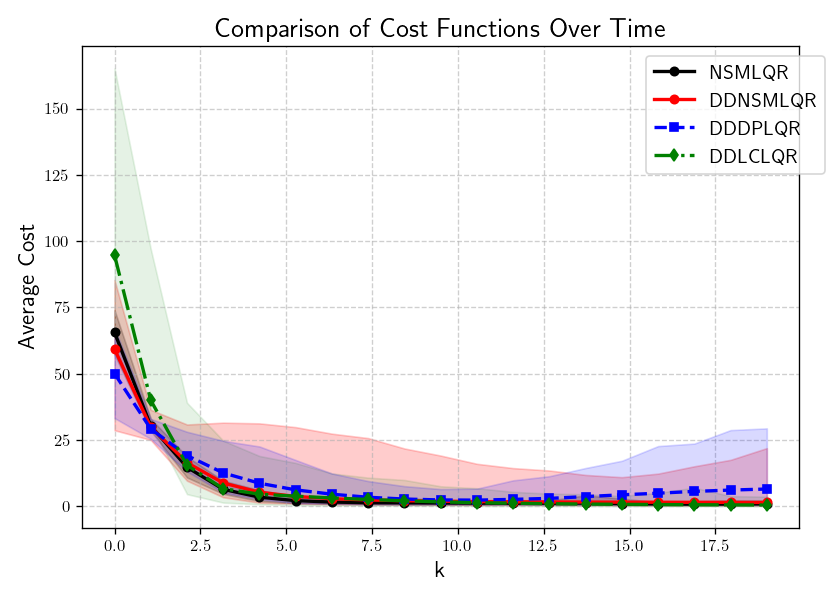}}
    \caption{Average cost for all methods on Example 2 with DDLCLQR using \(\alpha = 10\) and with DDDPLQR using \(W = 10^{-5}I\). Although DDLCLQR achieves full stability, its aggressive regularization leads to higher costs. DDLCLQR still suffers poor stability.}
\label{fig:quan2}
\end{figure}

\begin{figure}[]
\centerline{\includegraphics[width=0.9\columnwidth]{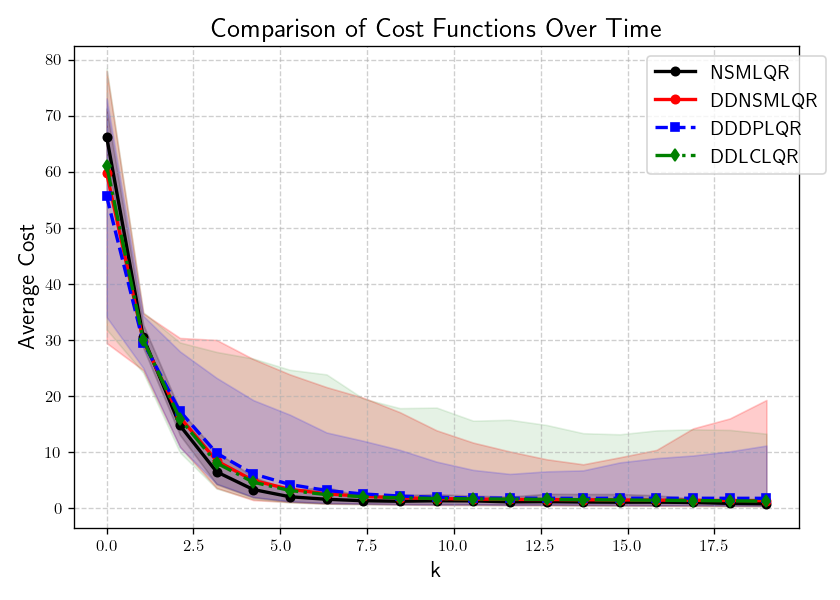}}
    \caption{Average cost comparison for all controllers on Example 2 with DDDPLQR using noise scaling \(W = 10^{-4}I\) and DDLCLQR with \(\alpha = 0.1\). Moderate tuning improves the performance of both approaches.}

\label{fig:quan3}
\end{figure}

\begin{figure}[]
\centerline{\includegraphics[width=0.9\columnwidth]{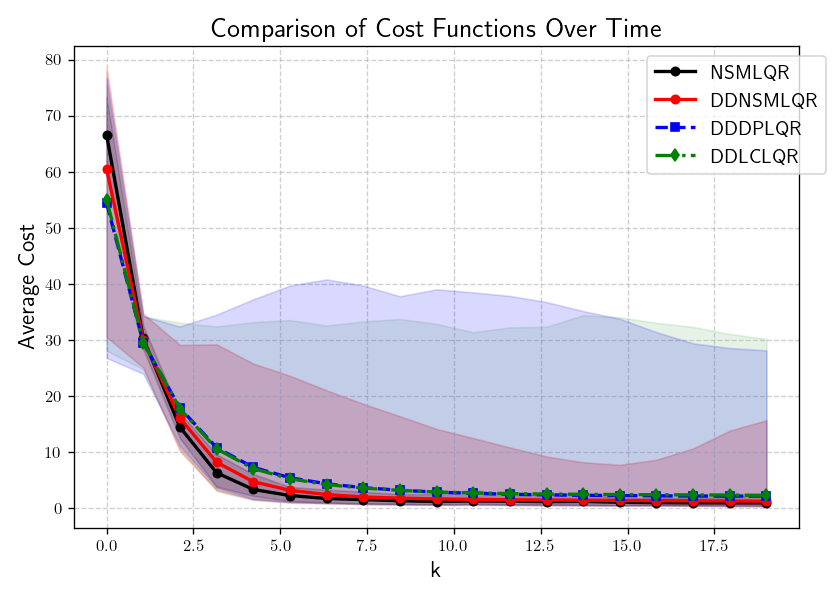}}
    \caption{Average closed-loop cost for all methods on Example 2 with DDDPLQR using \(W = 10^{-3}I\) and DDLCLQR using \(\alpha = 0.01\). }
\label{fig:quan4}
\end{figure}

\begin{table*}[h]
    \centering
    \caption{Performance Comparison of Data-Driven LQR Methods Using Noisy State Measurements for Example 2}
    \label{tab:inverted}
    \renewcommand{\arraystretch}{1.3}
    \resizebox{\textwidth}{!}{%
    \begin{tabular}{clcccc}
        \toprule
        \midrule
        & \textbf{Method} & \textbf{Success Count} & \textbf{Stable Eigenvalues} & $\mathbf{\|K - K^*\|}$ & \textbf{Average Gain Matrix $K$} \\
        \midrule
        \midrule       
        & NSMLQR  & ---  & ---  & --- & \multicolumn{1}{l}{$\begin{bmatrix}  0.13 & -12.34 &     0.44 &  -0.91 \end{bmatrix}$} \rule{0pt}{15pt} \\
        & DDNSMLQR (Theorem \ref{th:ddlqrnoisymeas}) & 150  & 121  & 0.78  & \multicolumn{1}{l}{$\begin{bmatrix}   0.12 & -11.55 &   0.41 &  -0.85 \end{bmatrix}$} \rule{0pt}{15pt} \\
        & DDDPLQR~\cite{ESMZAD2025112197}   & 150  & 2  & 2.04  & \multicolumn{1}{l}{$\begin{bmatrix} 0.14 &  -10.31 &   0.39 &  -0.79 \end{bmatrix}$} \rule{0pt}{15pt} \\
        & DDLCLQR~\cite{DEPERSIS2021109548} $(\alpha=0)$   & 150  & 2  & 3.51  & \multicolumn{1}{l}{$\begin{bmatrix} 0.11 & -8.84 &  0.36 &  -0.67 \end{bmatrix}$} \rule{0pt}{15pt} \\
        \midrule
        & DDDPLQR~\cite{ESMZAD2025112197} $(W=10^{-4}I)$   & 150  & 35  & 1.35  & \multicolumn{1}{l}{$\begin{bmatrix} 0.10 & -10.99 &   0.38 &  -0.79 \end{bmatrix}$} \rule{0pt}{15pt}\\
        & DDLCLQR~\cite{DEPERSIS2021109548} $(\alpha=0.1)$   & 150  & 150  & 0.42  & \multicolumn{1}{l}{$\begin{bmatrix} 0.21 & -12.74 &   0.48 &  -0.95 \end{bmatrix}$} \rule{0pt}{15pt}\\

        \midrule
        & DDDPLQR~\cite{ESMZAD2025112197} $(W=10^{-3}I)$   & 150  & 77  & 1.32  & \multicolumn{1}{l}{$\begin{bmatrix} 0.08 & -11.02 &   0.38 &  -0.80 \end{bmatrix}$} \rule{0pt}{15pt}\\
        & DDLCLQR~\cite{DEPERSIS2021109548} $(\alpha=0.01)$   & 150  & 66  & 1.37  & \multicolumn{1}{l}{$\begin{bmatrix} 0.08 & -10.96 &   0.37 &  -0.79 \end{bmatrix}$} \rule{0pt}{15pt}\\
        
        \midrule
       
        & DDDPLQR~\cite{ESMZAD2025112197}  $(W=10^{-5}I)$  & 150   & 5   & 1.78  & \multicolumn{1}{l}{$\begin{bmatrix} 0.09 & -10.58 &   0.36 &   -0.77 \end{bmatrix}$} \rule{0pt}{15pt} \\
        & DDLCLQR~\cite{DEPERSIS2021109548} $(\alpha=10)$  & 150  & 150  & 2.26   & \multicolumn{1}{l}{$\begin{bmatrix}  0.48 &  -14.54 &   0.63 &  -1.14   \end{bmatrix}$} \rule{0pt}{15pt}\\
        \midrule
        \bottomrule
    \end{tabular}%
    }
\end{table*}




\begin{remark}
    Throughout this work, it is assumed that the noise covariance matrices \( V \) (measurement noise) and \( W \) (process noise) are known. This assumption is common in the data-driven control literature and is also adopted in related works. While knowledge of these covariances facilitates tractable controller synthesis and stability analysis, estimating \( V \) and \( W \) directly from noisy data remains a challenging task. Developing reliable and data-driven methods for covariance estimation constitutes an important direction for future research.
\end{remark}

\section{Conclusion}\label{sec:con}
This paper proposed a direct data-driven control framework for linear systems with noisy state measurements, eliminating the need for explicit system identification. By leveraging noisy input-output trajectories, we developed a convex optimization-based methodology that ensures mean-square stability (MSS) and optimal performance. We first established a theoretical guarantee for MSS preservation in data-driven systems and then introduced a robust LMI-based condition to synthesize stabilizing controllers directly from noisy data. Building upon this, we formulated a novel data-driven LQR problem and solved it via an SDP, yielding optimal feedback gains that closely approximate the model-based solution.

Simulation results on benchmark systems, including an active suspension model and a rotary inverted pendulum, confirmed the robustness and superior performance of our approach compared to existing data-driven LQR methods.

\textbf{Future Work:} While this work assumes knowledge of the noise covariance matrices, estimating these quantities from data remains an open challenge. Future research will focus on developing data-driven noise covariance estimation techniques and integrating them into the control design. Moreover, extending this framework to nonlinear systems, output-feedback scenarios, and systems affected by biased or correlated noise represents promising directions for expanding the applicability of data-driven control in real-world environments.

\bibliographystyle{IEEEtran}
\bibliography{refs} 

\begin{IEEEbiography}[{\includegraphics[width=1in,height=1.25in,clip,keepaspectratio]{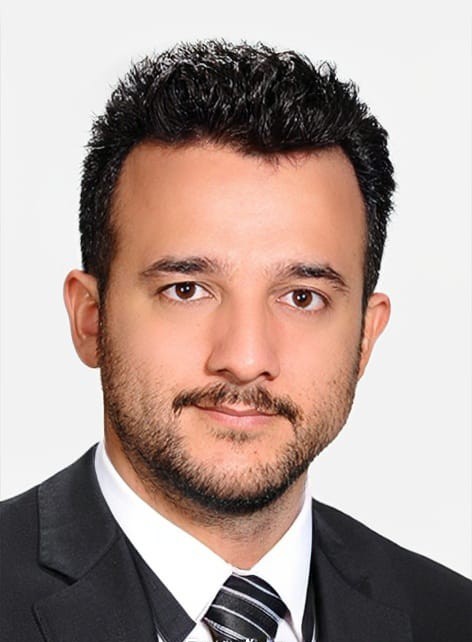}}]{Ramin Esmzad}
received his B.S. and M.S. degrees with honors in Electrical Engineering from Sahand University of Technology, Tabriz, Iran, in 2010 and 2012, respectively. He is currently pursuing his Ph.D. in the Department of Mechanical Engineering at Michigan State University, East Lansing, MI, USA. His research interests include model predictive control, machine learning in control, robotics, data-driven control, reinforcement learning, and cloud-based control system design.
\end{IEEEbiography}

\begin{IEEEbiography}[{\includegraphics[width=1in,height=1.25in,clip,keepaspectratio]{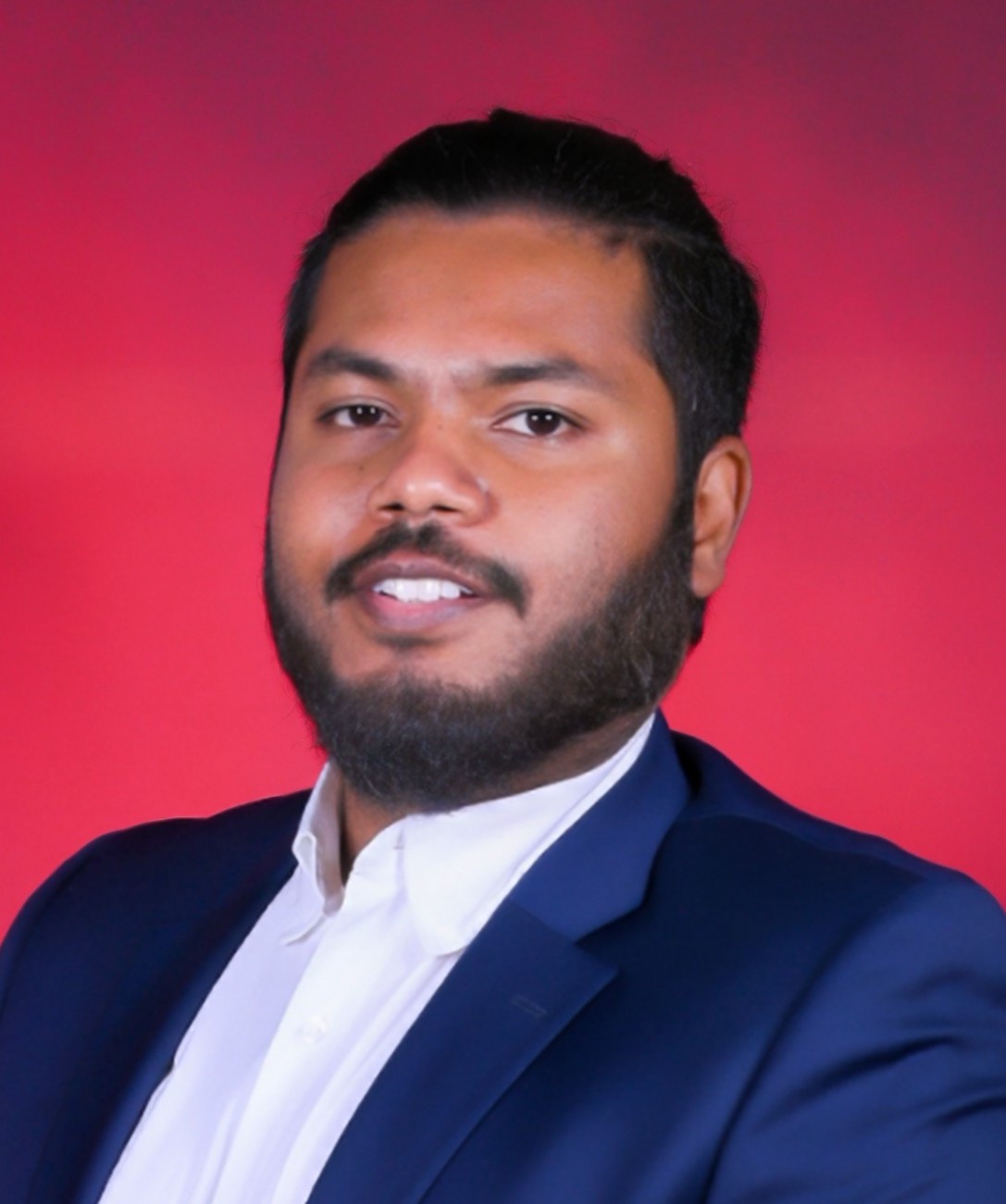}}]{Gokul S. Sankar}
 is a Motion Planning Engineer at Ford Motor Company, where he works at the intersection of cutting-edge autonomy and intelligent control. He earned his Bachelor's degree in Electronics and Instrumentation Engineering from Anna University, Chennai, India, in 2010, followed by an M.S. from the Indian Institute of Technology Madras in 2013, and a Ph.D. from the University of Melbourne, Australia, in 2019.

From 2020 to 2022, Gokul was a Research Engineer at Traxen Inc., USA, where he played a key role in developing advanced model-based control architectures for autonomous long-haul semi-trucks. Prior to that, he was a Research Fellow at the University of Michigan, Ann Arbor, from 2018 to 2019. His interests span reinforcement learning, optimal control, and robust control, with a strong focus on autonomous vehicles and cyber-physical systems.
\end{IEEEbiography}

\begin{IEEEbiography}[{\includegraphics[width=1in,height=1.25in,clip,keepaspectratio]{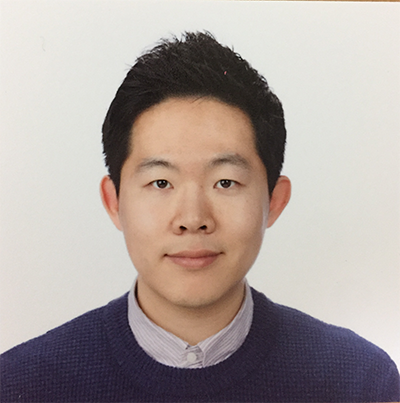}}]{Teawon Han}
 is a Senior Motion Planning Software Engineer at Motional AD Inc., where he develops scalable machine learning algorithms for behavior and trajectory planning in autonomous driving systems. He previously held senior engineering and research roles at Ford Motor Company, contributing to the development of next-generation ADAS technologies and Level 3/4 vehicle automation. Dr. Han earned his Ph.D. in Electrical and Computer Engineering from The Ohio State University, focusing on intelligent control systems for automated vehicles. His research interests include reinforcement learning, evolving systems, motion planning, and interpretable AI.
\end{IEEEbiography}

\begin{IEEEbiography}[{\includegraphics[width=1in,height=1.25in,clip,keepaspectratio]{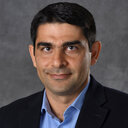}}]{Hamidreza Modares}
 received a B.S. degree from the University of Tehran, Tehran, Iran, in 2004, an M.S. degree from the Shahrood University of Technology, Shahrood, Iran, in 2006, and a Ph.D. degree from the University of Texas at Arlington, Arlington, TX, USA, in 2015, all in Electrical Engineering. He is currently an Associate Professor in the Department of Mechanical Engineering at Michigan State University. Before joining Michigan State University, he was an Assistant professor in the Department of Electrical Engineering at Missouri University of Science and Technology. His current research interests include reinforcement learning, safe control, machine learning in control, distributed control of multi-agent systems, and robotics. He is an Associate Editor of IEEE Transactions on Systems, Man, and Cybernetics: systems. 
\end{IEEEbiography}
\end{document}